\newcommand{\MINC}{\Minc}
\newcommand{\num}{\mathrm{\sharp}}
\newcommand{\PER}{\mathrm{PER}}
\newcommand{\SUPER}{\mathrm{S\text-PER}}
\newcommand{\BinTreeMincstrictSAT}{\ensuremath{\Minc^s\text-{\problemFont{tree}}\text-\SAT}\xspace}
\newcommand{\FinBinTreeMincstrictSAT}{\ensuremath{\Minc^s\text-{\problemFont{fintree}}\text-\SAT}\xspace}
\renewcommand\bibsection%
\title{Satisfiability of Modal Inclusion Logic:\\ Lax and Strict Semantics}
\newcommand{\inst}[1]{$^{#1}$}
\author{Lauri Hella\inst{1} \and Antti Kuusisto\inst{2} \and Arne Meier\inst{3}\and Heribert Vollmer\inst{3}}
\date{\small
\inst{1} Faculty of Natural Sciences,
University of Tampere,
Kanslerinrinne 1 B,\\
33014 University of Tampere, Finland,
\texttt{lauri.hella@uta.fi}\\
\inst{2}Fachbereich Mathematik und Informatik, Universität Bremen, Bibliothekstr.~1, 28359~Bremen,
Germany.
\texttt{antti.j.kuusisto@gmail.com} \\
\inst{3} Institut für Theoretische~Informatik,
Leibniz Universität Hannover,
Appelstr.~4, 30167~Hannover, Germany,
\texttt{\{meier,vollmer\}@thi.uni-hannover.de}}
\begin{document}

\maketitle

\begin{abstract}
We investigate the computational complexity of the satisfiability problem of modal inclusion logic. 
We distinguish two variants of the problem: one for the strict and another one for the lax semantics. 
Both problems turn out to be $\EXPTIME$-complete on general structures. 
Finally, we show how for a specific class of structures $\NEXPTIME$-completeness for these problems under strict semantics can be achieved.
\end{abstract}

\section{Introduction}  

Dependence logic was introduced by Jouko Väänänen \cite{vaananen07} in 2007. 
It is a first-order logic that enables one to talk about dependencies between variables explicitly.
It thereby generalizes Henkin quantifiers and also, in a sense, Hintikka's independence-friendly logic. 
Dependence logic can be used to formalize phenomena from a plethora of scientific disciplines such as database theory, social choice theory, cryptography, quantum physics, and others. 
It extends first-order logic by specific terms $\dep{x_{1},\dots,x_{n-1},x_{n}}$ known as dependence atoms, expressing that the value of the variable $x_{n}$ depends on the values of $x_{1},\dots,x_{n-1}$, i.e., $x_{n}$ is functionally determined by $x_{1},\dots,x_{n-1}$.
As such dependence does not make sense when talking about single assignments, formulas are evaluated over sets of assignments, called \emph{teams}. 
The semantics of the atom $\dep{x_{1},\dots,x_{n-1},x_{n}}$ is defined such that it is true in a team $T$ if in the set of all assignments in $T$, the value of $x_{n}$ is functionally determined by the values of $x_{1},\dots,x_{n-1}$.

In addition to dependence atoms, also generalised
dependency atoms have been introduced in the literature.
Examples include the independence atom (asserting that two sets of variables are informationally independent in a team), the non-emptiness atom (asserting that the team is non-empty), and, most importantly to the present paper, the inclusion atom $\vec{x}\subseteq\vec{y}$ for vectors of variables $\vec{x},\vec{y}$, asserting that in a team, the set of tuples assigned to $\vec{x}$ is included in the set of tuples assigned to $\vec{y}$. 
This corresponds to the definition of inclusion dependencies in database theory, which state that all tuples of values taken by the attributes $\vec{x}$ are also taken by the attributes $\vec{y}$. The notion of a generalized atom has been formally defined in \cite{Ku15}.

V\"a\"an\"anen \cite{vaananen08b} also introduced dependence atoms into modal logic. 
There teams are sets of worlds, and a dependence atom $\dep{p_{1},\dots,p_{n-1},p_{n}}$ holds in a team $T$ if there is a Boolean function that determines the value of $p_{n}$ in 
each world in $T$ from the values of $p_{1},\dots,p_{n-1}$.
The so-obtained modal dependence logic \MDL was studied from the point of view of expressivity and complexity in \cite{sev09}. 
Following the above mentioned developments in first-order dependence logic, modal dependence logic was also extended by generalized dependency atoms in \cite{KMSV14}, such as, e.g., independence atoms and inclusion atoms.

In the context of first-order dependence logic and its variants, two alternative kinds of team semantics have been distinguished, \emph{lax} and \emph{strict semantics} \cite{Galliani12}. 
Lax semantics is the standard team semantics, while strict semantics is obtained from lax semantics by introducing some additional uniqueness and strictness properties.
In the modal context, these additional constraints mainly concern the diamond modality $\Diamond$. 
In lax semantics, a formula $\Diamond\varphi$ holds in a team $T$ if there is a team $S$ such that every world in $T$ has at least one successor in $S$ and $\varphi$ holds in $S$. 
(Also, the worlds in $S$ are required to have a predecessor in $T$.) 
In strict semantics, we require that $S$ contains, for every world in $T$, a unique successor given by a  surjection $f:T\rightarrow S$.
(In first-order logic, strict semantics for the existential quantifier is defined similarly.) 
In both modal and first-order context, the operator known as \emph{splitjunction} (which corresponds to disjunction) is also defined differently for lax and strict
semantics (see  Section \ref{preliminaries} below).

For many variants of first-order and modal dependence logic, there is no distinction in expressive power between the two semantics. However, the choice of semantics plays a role in independence and inclusion logics, i.e., (first-order) logics with team
semantics and with independence or inclusion atoms.
For example, in the first-order case, inclusion logic with strict semantics has the same expressive power as dependence logic, i.e., \ESO (existential second-order logic) \cite{ghk13} and thus captures \NP, while with lax semantics, inclusion logic is equivalent to greatest fixpoint logic and consequently can express exactly the polynomial-time decidable properties of finite ordered structures.

%
%

The purpose of the present paper is to investigate the complexitly of the satisfiability problem of
modal inclusion logic; we cover both the case of lax as well as strict semantics.
We show that in both cases, the problem is $\EXPTIME$-complete. Furthermore,
the same results hold already for propositional inclusion logic, meaning roughly that the
lower bounds of
our results can be obtained even without including modal operators into the picture.
The $\EXPTIME$-completeness result for lax semantics is
obtained by identifying the upper bound via a translation to standard multimodal logic with the
global modality and converse modalities, and the lower bound is established via a
reduction from a certain succinct encoding of a $\P$-complete problem introduced in this paper.
The case for strict semantics is similar but requires some reasonably 
straightforward yet interesting modifications to
the arguments for lax semantics.
All the complexity results identified here hold also for finite satisfiability.

The conference version of this paper \cite{hkmv15} claims different complexities regarding the satisfiability problem with respect to the two underlying semantics: the satisfiability problem of
modal inclusion logic is erroneously claimed $\NEXPTIME$-complete there. In this paper we
fix this issue by providing detailed proofs for all the cases discussed. We also identify a case
where the ideas of the faulty argument from \cite{hkmv15} actually go through by investigating
modal inclusion logic in restriciton to pointed binary trees, i.e.,
binary trees such that the initial team is the singleton
containing the root only. We show that the satisfiability
problem of modal inclusion logic under strict semantics is $\NEXPTIME$-complete
over pointed binary trees.

The paper is organized as follows. After the preliminaries in Section \ref{preliminaries}, we
investigate the satisfiability problem of modal inclusion logic under lax semantics in
Section \ref{sect:coco}. The upper
bound is discussed in \ref{laxup} and the lower bound in \ref{laxlow}.
The corresponding analysis for strict semantics is then given in Section \ref{strictall}.
In Section \ref{trees} we consider strict semantics in restriction to pointed binary trees and
conlude in Section \ref{conclusion}.

\section{Preliminaries}\label{preliminaries}

Let $\Pi$ be a countably infinite set of proposition symbols.
The set of formulas of \emph{modal inclusion logic} $\Minc$ is defined inductively by the following grammar.
$$
\varphi \ddfn
			p\mid 
			\lnot p\mid 
			(\varphi_1\land\varphi_2)\mid 
			(\varphi_1\lor\varphi_2)\mid
p_1\cdots p_k\subseteq q_1\cdots q_k\mid
			\Box\varphi\mid
			\Diamond\varphi,
$$
where $p,p_1,\dots,p_k,q_1,\dots,q_k\in\Pi$ are proposition symbols and $k$
is any positive integer. The formulas $p_1\cdots p_k\subseteq q_1\cdots q_k$
are called \emph{inclusion atoms}.
For a set $\Phi\subseteq\Pi$, we let $\Minc(\Phi)$ be 
the sublanguage where only propositions from $\Phi$ are used.
Observe that formulas are essentially in negation normal form; negations may
occur only in front of proposition symbols.
A Kripke model is a structure 
$M=(W,R,V)$, where $W\not=\emptyset$
is a set (the domain of the model, or the set of worlds/states),
$R\subseteq W\times W$ is a binary relation (the accessibility or transition relation),
and $V\colon\Pi\rightarrow\mathcal{P}(W)$ is a
\emph{valuation} interpreting the proposition symbols.
Here $\mathcal{P}$ denotes the power set operator.
The language of basic unimodal logic is the sublanguage of $\Minc$
without formulas $p_1\cdots p_k\subseteq q_1\cdots q_k$.
We assume that the reader is familiar with standard 
Kripke semantics of modal logic; we 
let $M,w\Vdash\varphi$ denote the assertion
that the point $w\in W$ of the model $M$ satisfies $\varphi$
according to standard Kripke semantics.
We use the symbol $\Vdash$ in order to refer to
satisfaction according to standard Kripke semantics,
while the symbol $\models$ will be reserved for
\emph{team semantics}, to be defined below,
which is the semantics $\Minc$ is based on.
Let $T$ be a subset of 
the domain $W$ of a Kripke model $M$.
The set $T$ is called a \emph{team}.
The semantics of the inclusion atoms $p_1\cdots p_k\subseteq q_1\cdots q_k$ is
defined such that $M,T\models p_1\cdots p_k\subseteq q_1\cdots q_k$
if and only if for each $u\in T$, there exists a point $v\in T$ 
such that $$\bigwedge\limits_{i\, \in\, \{1,...,k\}}\bigl( u\in V(p_i)\Leftrightarrow v\in V(q_i)\bigr).$$
The intuition here is that every vector of truth values taken by $p_1,\dots,p_k$, is
included in the set of vectors of truth values taken by $q_1,\dots,q_k$.
Let $M = (W,R,V)$ be a Kripke model and $T\subseteq W$ a team. Define the set of successors of $T\subseteq W$ to be $R(T)\dfn\{s\in W\mid\exists s'\in T:(s',s)\in R\}$.
Also define $$R\langle T\rangle\dfn\{\, T\, '\subseteq W\mid\forall s\in T\exists s'\in T\, '\text{ s.t.\ }(s,s')\in R\text{ and }\forall s'\in T\, '\, \exists s\in T\text{ s.t.\ }(s,s')\in R\,\},$$ which we call the set of
allowed successor teams of $T$.
The following clauses, together with the above clause for inclusion atoms,
define \emph{lax semantics} (or lax team semantics)
for $\Minc$.
\[
\begin{array}{@{}l@{}l@{}l}
M,T\modelslax p\ &\Leftrightarrow &\  w\in V(p)\text{ holds for all }w\in T.\\
M,T\modelslax\neg p\ &\Leftrightarrow &\  w\not\in V(p)\text{ holds for all }w\in T.\\
M,T\modelslax\varphi\wedge\psi\ &\Leftrightarrow &\
M,T\modelslax\varphi\text{ and }M,T\modelslax\psi.\\
M,T\modelslax\varphi\vee\psi\ &\Leftrightarrow &\
M,S\modelslax\varphi\text{ and }M,S'\modelslax\psi
\text{ for some }S,S'\subseteq T\text{ such that }\\
& &\text{ we have }S\cup S' = T.\\ 
M,T\modelslax\Box\varphi\ &\Leftrightarrow &\ M,R(T)\modelslax\varphi.\\
M,T\modelslax\Diamond\varphi\ &\Leftrightarrow &\ \exists T\, '\in R\langle T\rangle:M,T\, '\modelslax\varphi.
%
\end{array}
\]
The other semantics for $\Minc$, \emph{strict semantics},
differs from the lax semantics only in its treatment of
the disjunction $\vee$ and diamond $\Diamond$. Thus, all other clauses in the
definition of $\modelsstrict$
are the same as those for $\modelslax$, and
the clauses for $\vee$ and $\Diamond$ are as follows.
\[
\begin{array}{lll}
M,T\modelsstrict\varphi\vee\psi\ &\Leftrightarrow &
M,S\modelsstrict\varphi\text{ and }M,S'\modelsstrict\psi
\text{ for some }S,S'\subseteq T\text{ such that }\\
& &S\cup S' = T\text{ and }S\cap S' = \emptyset.\\
M,T\modelsstrict\Diamond\varphi\ &\Leftrightarrow &$\text{ }$
M,f(T)\modelsstrict\varphi\text{ for some function }
f\colon T\rightarrow W
\text{ such that }\\
& &(u,f(u))\in R\text{ for all }u\in T.\
(\text{Here }f(T) = \{\, f(u)\, |\, u\in T\, \}.)
\end{array}
\]

Intuitively, the difference between the lax and strict semantics is as the terms suggest. In strict semantics, the division of a team with the \emph{splitjunction} $\lor$ is strict; no point is allowed to occur in both parts of the division contrarily to lax semantics. For $\Diamond$, strictness is
related to the  use of functions when finding a team of successors.
The difference between lax and 
strict semantics in first-order
inclusion logic \cite{Galliani12} is similar.

It is well known and easy to show that any formula of modal logic,
i.e., a formula of $\Minc$ \emph{without} inclusion atoms, is satisfied
by a team if and only if it is satisfied by every point in the team.

\begin{proposition}\label{flatness}
Let $\varphi$ be a formula of $\Minc$ \emph{without} inclusion atoms, and
let $T$ be a team on a Kripke model $M$. Then 
$$
	M,T\modelslax\varphi\iff M,T\modelsstrict\varphi\iff 
	\forall w\in T(M,w\Vdash\varphi).
$$
Here $\Vdash$ denotes
satisfaction in the standard sense of Kripke semantics. 
\end{proposition}

The equivalence in Proposition \ref{flatness} is the so-called \emph{flatness property}. It shows that
team semantics is essentially just a generalization of the classical (Kripke) semantics.
The satisfiability problem of $\Minc$ with lax (strict) semantics is the problem that asks,
given a formula $\varphi$ of $\Minc$, whether there exists a \emph{nonempty} team $T$
and a model $M$ such that $M,T\modelslax\varphi$ ($M,T\modelsstrict\varphi$) holds.
Note that the requirement of $T$ being nonempty is necessary: by the well-known \emph{empty team property}, $M,\emptyset\modelslax\varphi$ (and 
$M,\emptyset\modelsstrict\varphi$) holds for any formula $\varphi\in\Minc$.
Two different problems arise, depending on whether lax or strict semantics is used.
The corresponding finite satisfiability problems require that the satisfying models have a finite domain.

\section{Complexity of Satisfiability for Lax Semantics}
\label{sect:coco}

\subsection{Upper bound for lax semantics}\label{laxup}

In this section we show that the satisfiability and finite satisfiability problems
of \Minc with  lax semantics are in $\EXPTIME$.
The result is established by an equivalence preserving
translation to \emph{propositional dynamic logic} extended with the
global and converse modalities. It is well-known that this logic is
complete for $\EXPTIME$ (see \cite{blackburn, hema, eijck}).
In fact, we will only need
multimodal logic with the global modality and converse modalities
for our purposes.
Let $\Pi$ and $\mathcal{R}$
be countably infinite sets of
proposition symbols and binary relation symbols, respectively.
The following grammar defines a modal language $\mathcal{L}$.
$$\varphi\, \ddfn\, p\ |\ \neg\varphi\ |\ (\varphi_1\wedge\varphi_2)\ |\ \langle R\rangle \varphi\ |\ \langle R^{-1}\rangle\varphi\ |\ \langle E\rangle\varphi$$
Here $p\in\Pi$, $R\in\mathcal{R}$, and $E$ is a
novel symbol.
The (classical Kripke-style) semantics of $\mathcal{L}$ is defined with
respect to ordinary pointed Kripke models $(M,w)$ for multimodal logic.
Let $M = (W, \{R\}_{R\in\mathcal{R}}, V)$ be a Kripke model,
where $V\colon\Pi\rightarrow\mathcal{P}(W)$
is the \emph{valuation} function interpreting proposition symbols.
Let $w\in W$. The following clauses define the semantics of $\mathcal{L}$;
note that we use the turnstile $\Vdash$ instead of $\models$,
which is reserved for team semantics in this paper.
\[
\begin{array}{lll}
M,w\Vdash p & \Leftrightarrow\ & w\in V(p)\\
M,w\Vdash \neg\varphi &\Leftrightarrow& M,w\not\Vdash\varphi\\
M,w\Vdash \varphi_1\wedge\varphi_2 & \Leftrightarrow\ & M,w\Vdash\varphi_1\text{ and }M,w\Vdash\varphi_2\\
M,w\Vdash \langle R\rangle\varphi & \Leftrightarrow & M,u\Vdash\varphi \text{ for some }u\text{ such that }wRu\\
M,w\Vdash \langle R^{-1}\rangle\varphi & \Leftrightarrow & M,u\Vdash\varphi
\text{ for some }u\text{ such that }uRw\\
M,w\Vdash \langle E \rangle\varphi & \Leftrightarrow & M,u\Vdash\varphi
\text{ for some }u\in W\\
\end{array}
\]
%
%
%
%
%
%

%
We next define a satisfiability preserving translation from
modal inclusion logic into $\mathcal{L}$. We let $[R]$
and $[E]$ denote $\neg\langle R\rangle\neg$ and $\neg\langle E\rangle\neg$,
respectively. Before we fix the translation, we define some auxiliary formulas.
Let $\theta$ be a formula of \Minc.
We let $\mathit{SUB}(\theta)$ denote the set of subformulas of $\theta$;
we distinguish all instances of subformulas, so for example
$p\wedge p$ has \emph{three} subformulas 
(the right and the left instances of $p$ and the conjunction itself).
For each formula $\varphi\in\mathit{SUB}(\theta)$,
fix a fresh proposition symbol $p_{\varphi}$ that does not occur in $\theta$.
We next define, for each $\varphi\in\mathit{SUB}(\theta)$, a
novel auxiliary formula $\chi_{\varphi}$.
If $\varphi\in\mathit{SUB}(\theta)$ is a literal $p$ or $\neg p$, we define $\chi_{\varphi}\ \dfn\ [E]\bigl(\ p_{\varphi}\ \rightarrow\ \varphi\ \bigr).$
Now fix a symbol $R\in\mathcal{R}$, which will ultimately correspond
to the diamond used in modal inclusion logic.
For the remaining subformulas $\varphi$ of $\theta$, with
the exception of inclusion atoms, the formula $\chi_{\varphi}$ is defined as follows.
\begin{enumerate}
\item
$\chi_{\varphi\wedge\psi}\ \dfn\ 
[E]\bigl(\ (p_{\varphi\wedge\psi}\ \leftrightarrow\ 
p_\varphi)\ \wedge\ (p_{\varphi\wedge\psi} \leftrightarrow p_\psi)\ \bigr)$
\item
$\chi_{\varphi\vee\psi}\ \dfn\ 
[E]\bigl(\ p_{\varphi\vee\psi}\ \leftrightarrow\ (p_\varphi\vee p_\psi)\ \bigr)$
\item
$\chi_{\Box\varphi}\ \dfn\ 
[E]\bigl(\ (p_{\Box\varphi}\  \rightarrow\ [R] p_{\varphi})\ \wedge\ 
(p_{\varphi}
\  \rightarrow\ 
\langle R^{-1}\rangle p_{\Box\varphi})\ \bigr)$
\item
$\chi_{\Diamond\varphi}\ \dfn\ 
[E]\bigl(\ (p_{\Diamond\varphi}\  \rightarrow\ \langle R\rangle  p_{\varphi})
\wedge (p_{\varphi}
\  \rightarrow\ 
\langle R^{-1}\rangle p_{\Diamond\varphi}\ \bigr)\ \bigr)$
\end{enumerate}
We then define the formulas $\chi_{\alpha}$ where
$\alpha\in\mathit{SUB}(\theta)$ is an inclusion atom.
We appoint a fresh binary relation $R_{\alpha}$ for each inclusion
atom in $\theta$.
%
%
Assume $\alpha$ denotes the inclusion atom $p_1\cdots p_k\subseteq q_1\cdots q_k$.
We define
\[
\begin{array}{ll}
\chi_{\alpha}^+\ &\dfn\ \smallskip
\bigwedge\limits_{i\, \in\, \{1,...,k\}}
[E]\bigl(\, (p_{\alpha}\wedge p_i)\,\rightarrow\,
\langle R_{\alpha}\rangle( p_{\alpha}\wedge q_i)\, \bigr),\\ 
\chi_{\alpha}^-\ &\dfn\ \smallskip
\bigwedge\limits_{i\, \in\, \{1,...,k\}}
[E]\bigl(\, (p_{\alpha}\wedge \neg p_i)\,\rightarrow\,
\langle R_{\alpha}\rangle( p_{\alpha}\wedge \neg q_i)\, \bigr),\\ 
\chi_{\alpha}\, &\dfn\, \chi_{\alpha}^+\wedge \chi_{\alpha}^-\ \wedge\,
\bigwedge\limits_{i\, \in\, \{1,\dots,k\}}[E]\bigl(\, \langle R_{\alpha}\rangle q_i
\rightarrow [R_{\alpha}] q_i\, \bigr).
\end{array}
\]
Finally, we define $\text{ }$
$\varphi_{\theta}\ \dfn\ p_{\theta}\ 
\wedge \bigwedge\limits_{\varphi\, \in\, \mathit{SUB}(\theta)}\chi_{\varphi}\, .$

Note that clearly the size of the formula $\varphi_\theta$ is polynomial
with respect to the size of $\theta$.
\begin{theorem}\label{thm-sat-fsat-lax-exptime}
The satisfiability and finite satisfiability problems for
modal inclusion logic with lax semantics are in \EXPTIME.
\end{theorem}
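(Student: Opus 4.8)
The plan is to show that the translation $\theta \mapsto \varphi_\theta$ defined above is satisfiability-preserving, and then invoke the known $\EXPTIME$ upper bound for the target logic $\mathcal{L}$ (multimodal logic with converse and the global modality). Since $\varphi_\theta$ is computable in polynomial time from $\theta$ and $\mathcal{L}$-satisfiability is in $\EXPTIME$, this yields the $\EXPTIME$ bound for $\Minc$ under lax semantics. For the finite satisfiability claim, one observes that $\mathcal{L}$ has the finite model property (being, essentially, a fragment of $\logicFont{PDL}$ with global and converse modalities) and that the back-and-forth correspondence constructed below transfers finiteness in both directions, so the same translation works verbatim.

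\medskip

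The core of the argument is the equivalence claim: $\theta$ is satisfied by some nonempty team in some (finite) Kripke model iff $\varphi_\theta$ is satisfiable in some (finite) Kripke model of $\mathcal{L}$. For the forward direction, suppose $M, T \modelslax \theta$ with $T \neq \emptyset$. By induction on the structure of $\theta$, each subformula $\varphi \in \mathit{SUB}(\theta)$ comes with a team $T_\varphi$ witnessing its satisfaction along the evaluation (with $T_\theta = T$, the subteams chosen for the splitjunctions $\vee$ and the successor teams chosen for $\Diamond$ — note lax semantics makes these choices canonical enough). We expand $M$ to an $\mathcal{L}$-model $M'$ on the same domain: interpret $R$ as the accessibility relation of $M$, set $V'(p_\varphi) = T_\varphi$ for each subformula proposition $p_\varphi$, and interpret each fresh relation $R_\alpha$ appropriately for the inclusion atoms (for $\alpha = p_1\cdots p_k \subseteq q_1\cdots q_k$, one uses the witnessing map from the semantics of the inclusion atom to populate $R_\alpha$, then closes off $R_\alpha$-edges at each world to point only to worlds agreeing on the relevant $q_i$, which is exactly what the third conjunct of $\chi_\alpha$ enforces). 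One then checks, clause by clause, that every $\chi_\varphi$ holds globally in $M'$ and that $p_\theta$ holds at any point of $T$. The $[E]$-prefix on every $\chi_\varphi$ makes these global constraints, matching the global nature of team membership.

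\medskip

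For the converse direction, suppose $M', w \Vdash \varphi_\theta$ for some $\mathcal{L}$-model $M'$. Let $W'$ be the domain of $M'$, let $M$ be the Kripke model for $\Minc$ obtained from $M'$ by keeping the $R$-relation and the original propositions of $\theta$, and for each subformula $\varphi$ define the team $T_\varphi := V'(p_\varphi)$. We claim $M, T_\varphi \modelslax \varphi$ for every $\varphi \in \mathit{SUB}(\theta)$, proved by induction on $\varphi$ using the corresponding $\chi_\varphi$. The literal case uses $\chi_\varphi$ directly via the $[E]$ prefix; the conjunction and disjunction cases use clauses (1) and (2) to show $T_{\varphi\wedge\psi} = T_\varphi = T_\psi$, respectively $T_{\varphi\vee\psi} = T_\varphi \cup T_\psi$; the $\Box$ case uses clause (3) to show $T_{\Box\varphi} = R(T_\varphi)$ (one inclusion from each conjunct), and similarly clause (4) handles $\Diamond$ by exhibiting a legal successor team; and the inclusion atom case uses $\chi_\alpha^+$ and $\chi_\alpha^-$ together with the edge-closure conjunct to verify the truth-value-matching condition. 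Since $p_\theta$ holds at $w$, the team $T_\theta$ is nonempty, completing the direction.

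\medskip

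\textbf{The main obstacle} is getting the $\Diamond$ clause and the inclusion-atom clause exactly right, because these are the places where the lax semantics genuinely differs from a pointwise reading and where the ``back-and-forth'' encoding is subtle. For $\Diamond\varphi$, the encoding must ensure that the team $T_{\Diamond\varphi}$ and $T_\varphi$ are related by the accessibility relation in the precise way demanded by $R\langle T\rangle$: every world of $T_{\Diamond\varphi}$ has an $R$-successor in $T_\varphi$ (the first conjunct of $\chi_{\Diamond\varphi}$), and every world of $T_\varphi$ has an $R$-predecessor in $T_{\Diamond\varphi}$ (the second, converse, conjunct) — and one must verify that these two conditions together are equivalent to $T_\varphi \in R\langle T_{\Diamond\varphi}\rangle$, which is exactly the definition. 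For the inclusion atom, the dedicated relation $R_\alpha$ acts as a ``witness function'' graph, but because lax semantics for the atom only requires \emph{existence} of a matching point rather than a function, one has freedom in choosing $R_\alpha$; the delicate point is that the third conjunct of $\chi_\alpha$ forces $R_\alpha$-successors to be $q_i$-homogeneous, so that $\langle R_\alpha\rangle(p_\alpha \wedge q_i)$ really certifies ``there is a $p_\alpha$-world with $q_i$ true'' uniformly. Once these two clauses are verified, the remaining cases are routine, and the polynomial size of $\varphi_\theta$ (linear in $|\theta|$, with a constant-size $\chi_\varphi$ per subformula) together with the $\EXPTIME$ decidability of $\mathcal{L}$ finishes the proof.
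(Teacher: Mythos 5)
Your proposal follows essentially the same route as the paper's own proof: the identical translation $\varphi_\theta$, the same two-directional satisfiability-preservation argument (witness teams interpreting the $p_\varphi$ and a one-witness-per-world interpretation of $R_\alpha$ in the forward direction; teams $T_\varphi := V'(p_\varphi)$ and induction over subformulas in the converse direction), and the same handling of finite satisfiability via the finite model property of $\mathcal{L}$ together with preservation of the domain. One small slip: clause (3) yields $R(T_{\Box\varphi}) = T_\varphi$ (so that the induction hypothesis applies to the successor team), not $T_{\Box\varphi} = R(T_\varphi)$ as written; with that correction your argument coincides with the paper's.
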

\begin{proof}
We will show that any formula $\theta$ of modal inclusion
logic is satisfiable if and only if its translation $\varphi_{\theta}$ is.
Furthermore, $\theta$ is satisfiable over a domain $W$ if and only if
$\varphi_{\theta}$ is satisfiable over $W$, and therefore 
we also get the desired result for finite satisfiability;
$\mathcal{L}$ has the finite model property since it
clearly translates to two-variable logic via a simple
extension of the \emph{standard translation} (see
\cite{blackburn} for the definition of standard translation).
Let $M = (W,R,V)$ be a Kripke model.
Let $I(\theta)\subseteq\mathit{SUB}(\theta)$
be the set of inclusion atoms in $\theta$.
Assume that $M,X\modelslax\theta$, where $X$ is a nonempty team.
We next define a multimodal Kripke model
$N\, \dfn\, (W,R,\{R_{\alpha}\}_{\alpha\, \in\,
\mathit{I}(\theta)}, V\cup U),$
where $U\colon\{\, p_{\varphi}\ |\ \varphi\in\mathit{SUB}{(\theta)}\}\rightarrow\mathcal{P}(W)$
extends the valuation function $V$.
Define $U(p_{\theta}) = X$. Thus we have $M,U(p_{\theta})\modelslax\theta$.
Working from the
root towards the leaves of the parse tree of $\theta$,
we next interpret the remaining predicates $p_{\varphi}$ inductively
such that the condition $M, U(p_{\varphi})\modelslax\varphi$ is maintained.
Assume $U(p_{\psi\wedge\psi'})$ has been defined.
We define $U(p_{\psi}) = U(p_{\psi'}) = U(p_{\psi\wedge\psi'})$.
As $M,U(p_{\psi\wedge\psi'})\modelslax\psi\wedge\psi'$,
we have $M,U(p_{\psi})\modelslax\psi$ and $M,U(p_{\psi'})\modelslax\psi'$.
Assume then that $U(p_{\psi\vee\psi'})$ has been defined.
Therefore there exist sets $S$ and $S'$ such that 
$M,S\modelslax\psi$ and $M,S'\modelslax\psi'$, and furthermore, 
$S\cup S' = U(p_{\psi\vee\psi'})$.
We define $U(p_{\psi}) =  S$ and $U(p_{\psi'}) =  S'$.
Consider then the case where $U(p_{\Diamond\varphi})$ has been defined.
Call $T\dfn U(p_{\Diamond\varphi})$.
As $M,T\modelslax\Diamond\varphi$, there exists a set $T\hspace{0.4mm} '\subseteq W$
such that each point in $T$ has an $R$-successor in $T\hspace{0.4mm}'$,
and each  point in $T\hspace{0.4mm}'$ has an $R$-predecessor in $T$,
and furthermore, $M,T\hspace{0.4mm}'\modelslax\varphi$.
We set $U(p_{\varphi}) \dfn T\hspace{0.4mm}'$.
Finally, in the case for $p_{\Box\varphi}$,
the set $U(p_{\varphi})$ is defined to be the set of points
that have an $R$-predecessor in $U(p_{\Box{\varphi}})$.
We have now fixed an interpretation for each of the
predicates $p_{\varphi}$. The relations $R_{\alpha}$, where $\alpha$ is an
inclusion atom, remain to be interpreted.
Let $p_1\cdots p_k\subseteq q_1\cdots q_k$
be an inclusion atom in $\theta$, and denote this
atom by $\alpha$.
Call $T \dfn U(p_{\alpha})$.
Let $u\in T$. Since $M,T\modelslax\alpha$, there
exists a point $v\in T$ such that for each $i\in\{1,\dots,k\}$,
$u\in V(p_i)$ if and only if $v\in V(q_i)$. Define the pair $(u,v)$ to
be in $R_{\alpha}$. In this fashion, consider each point $u$ in $T$ 
and find exactly one corresponding point $v$ for $u$, and put the pair $(u,v)$
into $R_{\alpha}$. This fixes the interpretation of $R_{\alpha}$.
Let $w\in X = U(p_{\theta})$. 
Recalling how the sets $U(p_{\varphi})$ were defined, it is now
routine to check that $N,w\Vdash\varphi_{\theta}$.
We then consider the converse implication of the
current theorem. Therefore we assume that $N,w\Vdash\varphi_{\theta}$,
where $N$ is some multimodal Kripke
model in the signature of $\varphi_{\theta}$ and $w$ a
point in the domain of $N$.
We let $W$ denote the domain and $V$ the
valuation function of $N$.
For each $\varphi\in\mathit{SUB}(\theta)$, define the
team $X_{\varphi} \dfn V(p_{\varphi})$.
We will show by induction on the structure of $\theta$
that for each $\varphi\in\mathit{SUB}(\theta)$,
we have $N,X_{\varphi}\modelslax\varphi$.
Once this is done, it is clear that $M,X_{\theta}\modelslax\theta$,
where $M$ is the restriction of $N$ to the signature of $\theta$.
Furthermore, we
have $X_{\theta}\not=\emptyset$ as $w\in V(p_{\theta})$ (because $N,w\Vdash\varphi_{\theta}$).
Now recall the definition of the formulas $\chi_{\varphi}$,
where $\varphi\in\mathit{SUB}(\theta)$.
Let $p\in\mathit{SUB}(\theta)$.
It is clear that $N,X_p\modelslax p$, since
$N,w\Vdash \chi_p$. Similarly,
we infer that $N,X_{\neg q}\modelslax\neg q$ for $\neg q\in\mathit{SUB}(\theta)$.
Consider then a subformula $p_1\cdots p_k\subseteq q_1\cdots q_k$
of $\varphi$. Denote this inclusion atom by $\alpha$.
Consider a point $u\in X_{\alpha}$. If $u$ satisfies $p_i$
for some $i\in\{1,\dots,k\}$,
then we infer that since $N,w\Vdash\chi_{\alpha}^+$,
there exists a point $v_i\in X_{\alpha}$
that satisfies $q_i$.
Similarly, if $u$ satisfies $\neg p_j$,
we infer that since $N,w\Vdash\chi_{\alpha}^-$,
there exists a point $v_j\in X_{\alpha}$ that satisfies $\neg q_j$.
To conclude that $N,X_{\alpha}\modelslax \alpha$, it suffices
to show that all such points $v_i$ and $v_j$ can be chosen
such that $v_i = v_j$ for all $i,j\in\{1,\dots,k\}$.
This follows due to the third conjunct of $\chi_{\alpha}$.
Having established the basis of the induction,
the rest of the argument is straightforward.
We consider explicitly only the case where 
the subformula under consideration is $\Diamond\varphi$.
Here we simply need to argue that for each $u\in X_{\Diamond\varphi}$,
there exists a point $v\in X_{\varphi}$ such that $uRv$,
and for each $u'\in X_{\varphi}$, there exists a point $v'\in X_{\Diamond\varphi}$
such that $v'Ru'$. This follows directly, since $N,w\Vdash\chi_{\Diamond\varphi}$.\qed
\end{proof}

\subsection{Lower bound for lax semantics}\label{laxlow}

In this section we prove the satisfiability problem of $\MINC$ with lax semantics, $\MinclaxSAT$, 
to be hard for $\EXPTIME$. We do this by reducing to it the succinct version of the following
$\P$-hard problem which is closely related to the problem PATH SYSTEMS \cite[p. 171]{greenlaw}.

\begin{definition}
Let $\PER$ be the following problem: An instance of $\PER$ is a structure $\mA=(A,S)$ with $A=\{1,\ldots,n\}$ and $S\subseteq A^3$. A subset $P$ of $A$ is \emph{$S$-persistent} if it satisfies the condition 
\begin{itemize}
\item[($*$)] if $i\in P$, then there are $j,k\in P$ such that  $(i,j,k)\in S$.
\end{itemize}
$\mA$ is a positive instance if $n\in P$ for some $S$-persistent set $P\subseteq A$.
\end{definition}


It is well known that structures $(A,S)$ as above can be represented in a succinct form 
by using Boolean circuits.
Namely if $C$ is Boolean circuit with $3\cdot l$ input gates 
then it defines a structure $\mA_C=(A_C,S_C)$ given below.
We use here the notation $\sharp(a_1,\ldots,a_l)$ for the natural number 
$i$, whose binary representation is $(a_1,\ldots,a_l)$.
Let $A_C=\{1,\dots,2^l\}$, and for all $i,j,k\in A$, let $(i,j,k)\in S_C$ if and only if $C$ accepts the input tuple $$(a_1,\ldots,a_l,b_1,\ldots,b_l,c_1,\ldots,c_l)\in\{0,1\}^{3l},$$ where $i=\num(a_1,\ldots,a_l)$, $j=\num(b_1,\ldots,b_l)$ and $k=\num(c_1,\ldots,c_l)$.
We say that $C$ is a \emph{succinct representation} of 
$\mA_C$.

\begin{definition}
The succinct version of $\PER$, $\SUPER$, is the following problem: An instance of $\SUPER$ is a circuit $C$ with $3l$ input gates. $C$ is a positive instance, if $\mA_C$ is a positive instance of $\PER$.
\end{definition}

\begin{lemma}\label{circval}
$\SUPER$ is $\EXPTIME$-hard with respect to $\leqpm$-reductions.
\end{lemma}
\begin{proof}
\newcommand{\APSPACE}{\classFont{APSPACE}}

\newcommand{\mB}{\mathfrak{B}}
\newcommand{\acc}{\mathit{Acc}}
\newcommand{\rej}{\mathit{Rej}}
\newcommand{\Left}{\mathit{left}}
\newcommand{\Right}{\mathit{right}}
\newcommand{\AC}{\mathrm{AC}}
\newcommand{\RC}{\mathrm{RC}}
\newcommand{\RK}{\mathrm{rk}}
\newcommand{\depth}{\mathrm{TS}}

Let $M=(\Sigma,Q,\gamma,s_0,\delta)$ be an alternating 
Turing machine. That being so, $\Sigma$ is a finite tape alphabet, $Q$ is a finite set of states, 
the function $\gamma: Q\to\{\forall,\exists,\acc,\rej\}$ divides $Q$ according to the 
type of the states (universal, existential, accepting, rejecting), $s_0\in Q$ is the
initial state, and 
$\delta: \Sigma\times Q\to \mathcal{P}(\Sigma\times Q\times\{\Left,\Right,0\})$
is a transition function. 

Configurations of  $M$ are defined as usual. If $\alpha$ is a configuration, we write
$s^\alpha$ for its state. Furthermore, we write
$\alpha\mapsto_M \beta$ if $\alpha$ and $\beta$ are configurations such that
$\beta$ can be obtained from $\alpha$ by a transition allowed by $\delta$. 
Without loss of generality we assume that $\Sigma=\{0,1\}$, and 
$|\delta(0,s)|=|\delta(1,s)|=2$ for all $s$ such that $\gamma(s)\in\{\forall,\exists\}$.
On that account, if $\gamma(s^\alpha)\in\{\forall,\exists\}$ for a configuration $\alpha$,
then there are exactly two configurations $\beta$ such that $\alpha\mapsto_M \beta$.
On the other hand, if $\gamma(s)\in\{\acc,\rej\}$, we assume that 
$\delta(0,s)=\delta(1,s)=\emptyset$. As a result, the computation halts in a configuration
$\alpha$ such that $\gamma(s^\alpha)\in\{\acc,\rej\}$. 

The sets $\AC(M)$ of accepting configurations and $\RC(M)$ of rejecting 
configurations of $M$ are defined recursively in the usual way:
\begin{itemize}
\item If $\gamma(s^\alpha)=\acc$, then $\alpha\in \AC(M)$.
\item If $\gamma(s^\alpha)=\forall$ and $\beta\in\AC(M)$ for all
$\beta$ such that $\alpha\mapsto_M\beta$, then $\alpha\in\AC(M)$. 
\item If $\gamma(s^\alpha)=\exists$ and there is $\beta\in\AC(M)$ 
such that $\alpha\mapsto_M\beta$, then $\alpha\in\AC(M)$.
\item If $\gamma(s^\alpha)=\rej$, then $\alpha\in \RC(M)$.
\item If $\gamma(s^\alpha)=\forall$ and there is $\beta\in\RC(M)$ 
such that $\alpha\mapsto_M\beta$, then $\alpha\in\RC(M)$. 
\item If $\gamma(s^\alpha)=\exists$ and $\beta\in\RC(M)$ 
for all $\beta$ such that $\alpha\mapsto_M\beta$, then $\alpha\in\RC(M)$.
\end{itemize}


The machine $M$ accepts (rejects) a word $w\in \Sigma^*$ if $\alpha_w\in\AC(M)$
($\alpha_w\in\RC(M)$, respectively)
for the initial configuration $\alpha_w$ of $M$ with $w$ as input.
We denote the language $\{w\in\Sigma^*\mid M\text{ accepts }w\}$
by $L_M$. The machine $M$ decides the language $L_M$, if in addition
$M$ rejects all inputs $w\not\in L_M$. 

The class $\APSPACE$ consists of all languages $L_M$, where $M$ is an alternating
Turing machine $M$ that uses only polynomial number of tape cells.
It is well known that if $L\in\APSPACE$, then there is a polynomial space 
alternating machine $M$ that decides $L$, and which is acyclic in the sense that
there are no $\mapsto_M$-cycles among the configurations of $M$.

Now turn to the proof of the lemma. Let $L\in\APSPACE$, and let $M$ be an alternating Turing machine that works in polynomial space such that $L=L_M$. 
For each input word $w\in\{0,1\}^*$ we construct a circuit
$C_{M,w}$ in polynomial time from $w$ such that $C_{M,w}$ is a 
positive instance of $\SUPER$ if and only if $M$ accepts $w$.
This shows that $L_M$ is reducible to $\SUPER$, and since this holds 
for every language $L_M$ in $\APSPACE$, and $\APSPACE=\EXPTIME$,
it follows that $\SUPER$ is $\EXPTIME$-hard.

As explained above, we may assume that $\mapsto_M$ is acyclic, and
$M$ decides the language $L$. Let  $f$ be the polynomial such that for 
all inputs of length $n$, $M$ uses at most $f(n)$ tape cells.
Accordingly, if $w=w_1\ldots w_n\in\{0,1\}^n$ is an input word for $M$,
then we can encode the possible configurations of $M$ during
the computation on input $w$ with tuples
$$
	(a_1,\ldots,a_{2m+k})\in \{0,1\}^{2m+k},
$$
where $m:=f(n)$, as follows:
\begin{itemize}
\item $(a_1,\ldots,a_m)$ represents the contents of the tape in $\alpha$,
\item $(a_{m+1},\ldots,a_{2m})$ encodes the position of the tape head in $\alpha$:
$a_{m+i}=1$ if and only if the head is on the $i$-th cell,
\item $(a_{2m+1},\ldots,a_{2m+k})$ encodes the state $s^\alpha$: $a_{2m+(i+1)}=1$ if and only if 
$s^\alpha=s_i$, where $s_0,\ldots,s_{k-1}$, lists $Q$ in some fixed order.
\end{itemize}

The circuit $C_{M,w}$ will now be defined in such a way that the following
conditions hold:
\begin{enumerate}
\item $C_{M,w}$ has $3l$ input gates, where $l=2m+k$.

\item If $\vec{a}=(a_1,\ldots,a_l)\in\{0,1\}^l$ is a tuple
which encodes a configuration $\alpha$
such that $\gamma(s^\alpha)=\acc$,
then $C_{M,w}$ accepts the input $\vec{a}\,\vec{a}\,\vec{a}$.

\item If $\vec{a}=(a_1,\ldots,a_l)$, $\vec{b}=(b_1,\ldots,b_l)$ 
and $\vec{c}=(c_1,\ldots,c_l)$ are tuples
in $\{0,1\}^l$ which encode configurations $\alpha$, $\beta_1$ and $\beta_2$
such that $\beta_1\not=\beta_2$, 
$\gamma(s^\alpha)=\forall$, 
$\alpha\mapsto_M\beta_1$ and $\alpha\mapsto_M\beta_2$,
then $C_{M,w}$ accepts the input $\vec{a}\,\vec{b}\,\vec{c}$.

\item If $\vec{a}=(a_1,\ldots,a_l)$ and $\vec{b}=(b_1,\ldots,b_l)$ 
are tuples
in $\{0,1\}^l$ which encode configurations $\alpha$ and $\beta$
such that $\gamma(s^\alpha)=\exists$ and $\alpha\mapsto_M\beta$,
then $C_{M,w}$ accepts the input $\vec{a}\,\vec{b}\,\vec{b}$.

\item If $\vec{a}=(1,\ldots,1)\in\{0,1\}^l$ and $\vec{b}=(b_1,\ldots,b_l)\in\{0,1\}^l$
is a tuple that encodes the initial configuration $\alpha_w$ of $M$ with input word $w$, 
then $C_{M,w}$ accepts the input $\vec{a}\,\vec{b}\,\vec{b}$.

\item $C_{M,w}$ does not accept any other input tuples $(a_1,\ldots,a_{3l})\in \{0,1\}^{3l}$.
\end{enumerate}
Clearly the conditions 1-6 above can be checked in polynomial time 
with respect to $l$, and accordingly with respect to the length $n$ of the input $w$.  
As a result, the circuit $C_{M,w}$ can be constructed 
in polynomial time from the input word $w$.

Assume first that $M$ accepts the input $w$. Then the initial configuration $\alpha_w$
of $M$ with input $w$ is in the set $\AC(M)$. 
Consider now the structure $\mA_C=(A_C,S_C)$ defined by the circuit $C:=C_{M,w}$. 
Let $P_0\subseteq A_C$ be the set of all $i=\num(a_1,\ldots,a_{l})$ such that
$(a_1,\ldots,a_{l})$ encodes a configuration $\alpha\in\AC(M)$. Using conditions
2-4 and the definition of $\AC(M)$ it is easy show that $P_0$ is $S_C$-persistent.
But then, by condition 5, $P=P_0\cup\{\num(1,\ldots,1)\}$ is an $S_C$-persistent
set such that $2^l\in P$, and consequently $C$ is a positive instance of $\SUPER$.

Assume then that $C:=C_{M,w}$ is a positive instance of $\SUPER$.
Then there is an $S_{C}$-persistent set $P$ such that $2^{l}=\num(1,\ldots,1)\in P$. 
Let $P^M$ be the set of all configurations
$\alpha$ of $M$ such that $\num(a_1,\ldots,a_l)\in P$ for the tuple $(a_1,\ldots,a_l)$
that encodes $\alpha$.
By conditions 5 and 6, the initial configuration $\alpha_w$ is in $P^M$. 
Thus, it suffices to show that $P^M\subseteq\AC(M)$. 

Suppose this is not the case, i.e., $P^M\setminus\AC(M)\not=\emptyset$.
Since $P^M$ is finite, and $\mapsto_M$ is acyclic, then there exists a
configuration $\alpha\in P^M\setminus\AC(M)$ which does not have 
$\mapsto_M$-successors in $P^M\setminus\AC(M)$. We divide the
argument into cases according to the type $\gamma(s^\alpha)$ of the 
state of $\alpha$.
\begin{itemize}

\item Observe first that $\gamma(s^\alpha)=\acc$ is not possible, since 
$\alpha\not\in\AC(M)$. 

\item Assume that $\gamma(s^\alpha)=\rej$. Let $(a_1,\ldots,a_l)\in \{0,1\}^l$ be the
tuple that encodes $\alpha$. Then by conditions 2-6, there are no tuples
$(b_1,\ldots,b_l),(c_1,\ldots,c_l)$ such that 
$(\num(a_1,\ldots,a_l),\num(b_1,\ldots,b_l),\num(c_1,\ldots,c_l))\in S_C$.
This means that $\alpha\not\in P^M$, contrary to our assumption.

\item If $\gamma(s^\alpha)=\forall$, then by conditions 3 and 6 we see that 
$\beta_1,\beta_2\in P^M$, where $\beta_1$ and $\beta_2$ are the 
$\mapsto_M$-successors of $\alpha$. Since $\alpha$ has no $\mapsto_M$-successors 
in $P^M\setminus\AC(M)$, we have $\beta_1,\beta_2\in\AC(M)$. But then
by the definition of $\AC(M)$, also $\alpha\in\AC(M)$, contrary
to our assumption. 

\item If $\gamma(s^\alpha)=\exists$,
then by conditions 4 and 6, $\beta_1\in P^M$ or $\beta_2\in P^M$, 
where $\beta_1$ and $\beta_2$ are the 
$\mapsto_M$-successors of $\alpha$.
Since $\alpha$ has no $\mapsto_M$-successors 
in $P^M\setminus\AC(M)$, it follows that either $\beta_1\in\AC(M)$ or 
$\beta_2\in\AC(M)$. As a result, by the definition of $\AC(M)$, we have
$\alpha\in\AC(M)$, contrary to our assumption.

\end{itemize}
Since all the cases lead to contradiction, we conclude that $P^M\subseteq \AC(M)$.
\qed
\end{proof}

We will next show that $\SUPER$ is polynomial time reducible 
to the satisfiability problem of $\MINC$ with lax semantics, and in view of this
the latter is also $\EXPTIME$-hard.
In the proof we use the following notation: If $T$ is a team 
and $p_1,\ldots, p_n$ are proposition symbols, then $T(p_1,\ldots,p_n)$
is the set of all tuples $(a_1,\ldots,a_n)\in\{0,1\}^n$ such that for some $w\in T$,
$
	a_t=1\iff w\in V(p_t)\text{ for }t\in\{1,\ldots,n\}.
$
Note that the semantics of inclusion atoms can now be expressed as
$$
	M,T\models p_1\cdots p_n \subseteq q_1\cdots q_n \iff T(p_1,\ldots,p_n)\subseteq T(q_1,\ldots,q_n).
$$

\begin{theorem}\label{laxlower}
The satisfiability and finite satisfiability problems for \Minc with lax semantics
are hard for \EXPTIME with respect to $\leqpm$-reductions.
\end{theorem}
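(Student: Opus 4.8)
The plan is to exhibit a polynomial-time many-one reduction from $\SUPER$ to the satisfiability problem of $\Minc$ with lax semantics; since the composition of a $\PSPACE$ reduction with a polynomial-time reduction is again a $\PSPACE$ reduction, Proposition~\ref{circval} then gives $\EXPTIME$-hardness with respect to $\PSPACE$ reductions. So let $C$ be a Boolean circuit with $3l$ input gates and output gate $o$, and write $n=2^l$ for the designated element of $A_C$. We build a formula $\varphi_C\in\Minc$ of size polynomial in $|C|$ that is (finitely) satisfiable under lax semantics iff $\mA_C$ is a positive instance of $\PER$. The central idea is to represent a candidate set $P\subseteq A_C$ by a team $T$ via $P=T(p_1,\dots,p_l)$, using disjoint proposition blocks $r_1,\dots,r_l$ and $s_1,\dots,s_l$ on each world of $T$ to carry a pair of witnesses $j,k$, plus one fresh proposition per gate of $C$ recording a run of $C$ on the input $(p\text{-bits},r\text{-bits},s\text{-bits})$ of that world.

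Concretely, let $\gamma$ be the (inclusion-atom-free) conjunction, over all gates of $C$, of the classical formula stating that the gate's proposition holds iff its operation applied to the propositions of its input gates holds, where the propositions of the $3l$ input gates are identified with $p_1,\dots,p_l,r_1,\dots,r_l,s_1,\dots,s_l$. By the flatness property, $M,T\modelslax\gamma$ iff every $w\in T$ assigns the gate propositions consistently with evaluating $C$ on its own $(p,r,s)$-bits; in particular $\gamma\wedge o$ holds exactly when every $w\in T$ has $C$ accept its triple. Fix further fresh propositions $n_1,\dots,n_l$ and let $\nu$ be the flat formula asserting that they spell out $n$ in binary, so (again by flatness) $M,T\modelslax\nu$ forces $T(n_1,\dots,n_l)=\{n\}$ for nonempty $T$. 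Put
\[ \varphi_C\ \dfn\ \gamma\ \wedge\ o\ \wedge\ \nu\ \wedge\ (r_1\cdots r_l\subseteq p_1\cdots p_l)\ \wedge\ (s_1\cdots s_l\subseteq p_1\cdots p_l)\ \wedge\ (n_1\cdots n_l\subseteq p_1\cdots p_l). \]
Note $\varphi_C$ has no modalities, so the accessibility relation of a model is irrelevant.

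For correctness, suppose first that $\mA_C$ is positive, witnessed by an $S_C$-persistent $P\ni n$; build a finite model with one world $w_i$ per $i\in P$, where $w_i$'s $p$-bits encode $i$, its $r$- and $s$-bits encode witnesses $j,k\in P$ with $(i,j,k)\in S_C$, its gate propositions encode the run of $C$ on $(i,j,k)$ (so $o$ holds), and its $n$-bits encode $n$; then $T=\{w_i\mid i\in P\}$ is nonempty, flatness gives $M,T\modelslax\gamma\wedge o\wedge\nu$, the atoms $r_1\cdots r_l\subseteq p_1\cdots p_l$ and $s_1\cdots s_l\subseteq p_1\cdots p_l$ hold because the witnesses of each $w_i$ lie in $P$ and so occur as $p$-bits of some $w_j,w_k\in T$, and $n_1\cdots n_l\subseteq p_1\cdots p_l$ holds since $n\in P$. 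Conversely, if $M,T\modelslax\varphi_C$ with $T\ne\emptyset$, set $P\dfn T(p_1,\dots,p_l)$; the atom $n_1\cdots n_l\subseteq p_1\cdots p_l$ with $\nu$ gives $n\in P$, and for any $i\in P$, picking $w\in T$ with $p$-bits encoding $i$ and letting $j,k$ be the numbers on $w$'s $r$- and $s$-bits, the conjuncts $\gamma\wedge o$ force $(i,j,k)\in S_C$ while the two inclusion atoms force $j,k\in P$; hence $P$ is $S_C$-persistent and $\mA_C$ is positive. Since the model in the first direction is finite, the reduction works equally for finite satisfiability.

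The one step needing care — and the conceptual heart of the argument — is that the exponentially large structure $\mA_C$ is never written down: membership $(i,j,k)\in S_C$ is decided \emph{locally} at a single world by hard-wiring the polynomial-size circuit $C$ into the flat formula $\gamma$ and invoking flatness, while the inclusion atoms are exactly strong enough to feed the witness coordinates $r,s$ of each world back into the membership coordinates $p$ of the whole team (and to force the designated element $n$ into the team). Once this design is in place, the rest is a routine verification, together with the standard observation that composing the $\PSPACE$ reduction of Proposition~\ref{circval} with this polynomial-time reduction stays within $\PSPACE$.
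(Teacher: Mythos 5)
Your proposal is correct and follows essentially the same route as the paper: a polynomial-time reduction from $\SUPER$ (composed with Proposition~\ref{circval}) producing a modality-free formula that hard-wires the circuit via a flat conjunction and uses inclusion atoms to pull the two witness blocks and the designated element back into the $p$-block of the team. The only cosmetic difference is that you force the designated element into the team via fresh propositions $n_1,\dots,n_l$ and a flat formula $\nu$, whereas the paper reuses the always-true output-gate proposition through the atom $p_m\cdots p_m\subseteq p_1\cdots p_l$; both devices work.
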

\begin{proof}
Let $C$ be a Boolean circuit with $3l$ input gates. 
Let $g_1,\ldots,g_m$ be the gates of $C$, where $g_1,\ldots, g_{3l}$
are the input gates and $g_m$ is the output gate.
We fix a distinct Boolean variable $p_i$ for each gate $g_i$. 
Let $\Phi$ be the set $\{p_1,\ldots,p_m\}$ of proposition symbols.
We define for each $i\in\{3l+1,\ldots,m\}$ a formula $\theta_i\in\MINC(\Phi)$ that describes 
the correct operation of the gate $g_i$:
$$
	\theta_i=\begin{cases}
	p_i\leftrightarrow \lnot p_j& \text{if $g_i$ is a NOT gate
	with input $g_j$}\\
	p_i\leftrightarrow (p_j\land p_k)& \text{if $g_i$ is an AND gate
	with inputs $g_j$ and $g_k$}\\
	p_i\leftrightarrow (p_j\lor p_k) &\text{if $g_i$ is an OR gate
	with inputs $g_j$ and $g_k$}
	\end{cases}
$$
Let $\psi_C$ be the formula $\bigl( \bigwedge_{3l+1\le i\le m}\theta_i \bigr)\;\land \, p_m$.
Thus, $\psi_C$ essentially says that the truth values of $p_i$, $1\le i\le m$, match an  
accepting computation of $C$.

Now we can define a formula $\varphi_C$ of $\MINC(\Phi)$ which is satisfiable
if and only if $C$ is a positive instance of $\SUPER$.
For the sake of readability, we denote here the variables corresponding 
to the input gates $g_{l+1},\ldots,g_{2l}$ by $q_1,\ldots,q_l$. Similarly,
we denote the variables $p_{2l+1},\ldots,p_{3l}$ by $r_1,\ldots,r_l$.
$$\label{pg:varphiC}
	\varphi_C:= \psi_C
	\land\; q_1\cdots q_l\subseteq p_1\cdots p_l \;
	\land\; r_1\cdots r_l\subseteq p_1\cdots p_l\;
	\land\; p_m\cdots p_m \subseteq p_1\cdots p_l.
$$
Note that $\varphi_C$ can clearly be constructed from the circuit $C$ in polynomial
time.

Assume first that $\varphi_C$ is satisfiable. That being so there is a Kripke model
$M=(W,R,V)$ and a nonempty team $T$ of $M$ such that $M,T\modelslax\varphi_C$.
Consider the model $\mA_C=(A_C,S_C)$ that corresponds to the circuit $C$.
We define a subset $P$ of $A_C$ as follows: $P:=\{\num(a_1,\ldots,a_l)\mid (a_1,\ldots,a_l)\in T(p_1,\ldots,p_l)\}.$

Observe first that since $M,T\modelslax p_m$ and $M,T\modelslax p_m\cdots p_m \subseteq p_1\cdots  p_l$,
$(1,\ldots,1)\in T(p_1,\ldots,p_l)$ and that being so $2^l=\num(1,\ldots,1)\in P$.
On that account, it suffices to show that  $P$ is $S_C$-persistent. To prove this,
assume that $i=\num(a_1,\ldots, a_l)\in P$. Then there is
a state $w\in T$ such that $w\in V(p_t) \iff a_t=1\text{\quad for }1\le t\le l.$

Define now $b_t,c_t\in \{0,1\}$, $1\le t\le l$, by the condition
$$
	b_t=1\iff w\in V(q_t)\text{\quad and \quad} 
	c_t=1\iff w\in V(r_t).
$$

As $M,T\modelslax \psi_C$, it follows from flatness (see Proposition \ref{flatness}) that $M,w\Vdash\psi_C$. 
By the definition of $\psi_C$, this means that the circuit $C$ accepts the 
input tuple 
$
	(a_1,\ldots,a_l,b_1,\ldots,b_l,$ $c_1,\ldots,c_l).
$ 
That being the case, $(i,j,k)\in S_C$, where $j=\num(b_1,\ldots,b_l)$ and $k=\num(c_1,\ldots,c_l)$. 

We still need to show that $j,k\in P$.  To see this, note that since
$M,T\modelslax q_1\cdots q_l\subseteq p_1\cdots p_l$, there exists
$w'\in T$ such that
$$
	w'\in V(p_t)\iff w\in V(q_t) \iff b_t=1\text{\quad for }1\le t\le l.
$$
Accordingly, $(b_1,\ldots,b_l)\in T(p_1,\ldots,p_n)$, and on that account $j\in P$. Similarly
we see that $k\in P$.

To prove the other implication, assume that $C$ is a positive instance of the
problem $\SUPER$. Then there is an $S_C$-persistent set $P\subseteq A_C$
such that $2^l\in P$. We let $M=(W,R,V)$ be the Kripke model 
and $T$ the team of $M$ such that 
\begin{itemize}
\item $T=W$ is the set of all tuples $(a_1,\ldots,a_m)\in \{0,1\}^m$ that correspond 
to an accepting computation of $C$ and for which 
$\num(a_1,\ldots,a_l),$ $\num(a_{l+1},\ldots,a_{2l}),$ $\num(a_{2l+1},\ldots,a_{3l})\in P$,
\item $R=\emptyset$, and $V(p_t)=\{(a_1,\ldots,a_m)\in W\mid a_t=1\}$ for $1\le t\le m$.
\end{itemize}
We will now show that $M,T\modelslax\varphi_C$, and accordingly $\varphi_C$ is
satisfiable. Note first that $M,T\modelslax\psi_C$, since by the definition
of $T$ and $V$, for any $w\in T$, the truth values of $p_i$ in $w$ correspond
to an accepting computation of $C$.

To prove  $M,T\modelslax q_1\cdots q_l\subseteq p_1\cdots p_l$, assume
that $(b_1,\ldots,b_l)\in T(q_1,\ldots,q_l)$. Then $i:=\num(b_1,\ldots,b_l)\in P$,
and since $P$ is $S_C$-persistent, there are $j,k\in P$ such that
$(i,j,k)\in S_C$. Accordingly, there is a tuple $(a_1,\ldots, a_m)\in\{0,1\}^m$
corresponding to an accepting computation of $C$
such that $(a_1,\ldots,a_l)=(b_1,\ldots,b_l)$, $j=\num(a_{l+1},\ldots,a_{2l})$ and
$k=\num(a_{2l+1},\ldots,a_{3l})$. This means that $(a_1,\ldots,a_m)$ is in $T$,
and that being the case $(b_1,\ldots,b_l)\in T(p_1,\ldots,p_l)$. The claim
$M,T\modelslax r_1\cdots r_l\subseteq p_1\cdots p_l$ is proved in the same way.

Note that since $M,T\models p_m$, we have $T(p_m,\ldots,p_m)=\{(1,\ldots,1)\}$.
Furthermore, since $2^l=\num(1,\ldots,1)\in P$ and $P$ is $S_C$-persistent, 
there is an element $(a_1,\ldots,a_m)\in T$ such that  
$(a_1,\ldots,a_l)=(1,\ldots,1)$.
Consequently, we see that $(1,\ldots,1)\in T(p_1,\ldots, p_l)$, 
and consequently $M,T\modelslax p_m\cdots p_m \subseteq p_1\cdots p_l$.
\qed
\end{proof}

\begin{corollary}
The satisfiability and finite satisfiability problems of modal inclusion logic with lax semantics
are $\EXPTIME$-complete with respect to $\leqpm$-reductions.
\end{corollary}

Note that the formula $\varphi_C$ used in the proof of Theorem \ref{laxlower}
is in \emph{propositional inclusion logic}, i.e., it  does not contain any modal operators.
In view of this, our proof shows that the satisfiability problem of propositional inclusion logic
is already $\EXPTIME$-hard. Naturally, this problem is also in $\EXPTIME$,
since propositional inclusion logic is a fragment of $\MINC$.

\begin{corollary}
The satisfiability and finite satisfiability problems of propositional inclusion logic with lax semantics
are $\EXPTIME$-complete with respect to $\leqpm$-reductions.
\end{corollary}

\section{Complexity of Satisfiability for Strict Semantics}\label{strictall}


We now show that the satisfiability and finite 
satisfiability problems for $\Minc$ with strict semantics are in $\EXPTIME$.
The proof is a simple adaptation of the upper bound argument
for lax semantics from the proof of Theorem~\ref{thm-sat-fsat-lax-exptime}, but uses the logic $\mathcal{GC}^2$, that is,
\emph{two-variable guarded fragment with counting}.
Both, the standard and finite satisfiability problems of
this logic are $\EXPTIME$-complete, as
has been shown by Kazakov \cite{kazakov} and Pratt-Hartmann \cite{pratt-hart}, respectively.
The set of formulae of the logic $\mathcal{GC}^2$ is
the smallest set $S$ that satisfies the following conditions.
\begin{enumerate}
\item
The set $S$ contains all atomic relational formulae 
that use only the fixed variables $x$ and $y$.
Equalities are also allowed.
\item
The set $S$ is closed under the standard Boolean operators.
\item
If $\varphi(u)\in S$ has at most one free variable $u\in\{x,y\}$,
then the formulae $\exists u\varphi(u)$ and $\forall u\varphi(u)$ are in $S$.
\item
Let $\gamma$ denote a \emph{guard atom}, i.e, a binary relational atom  of
the type $Rxy$ or $Ryx$, where $R$ is any binary relation
symbol other than equality.  Let $Q$ denote
any of the quantifiers $\exists$, $\exists^{\leq k}$, $\exists^{\geq k}$, $\exists^{= k}$,
where $k$ denotes any positive integer (encoded in binary). Let $u\in\{x,y\}$ be a variable,
and let $\varphi$ be any formula in $S$.
Then the \emph{guarded formulae} $Qu(\gamma\wedge
\varphi)$ and $\forall u(\gamma\rightarrow\varphi)$ are in $S$.
\end{enumerate}
The semantics of $\mathcal{GC}^2$ is clear. We will now show how
the satisfiability and finite satisfiability problems of \Minc
with strict semantics are reduced to the
corresponding problems for $\mathcal{GC}^2$.

\begin{theorem}\label{thm:upperBoundStrict}
The satisfiability and finite satisfiability problems for $\Minc$ with strict semantics are in $\EXPTIME$.
\end{theorem}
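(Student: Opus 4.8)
The plan is to mirror the proof of the upper bound for lax semantics almost verbatim, replacing $\modelslax$ by $\modelsstrict$ everywhere and using the modified $\mathrm{FOC}^2$-translation constructed above. Write $t^{\ast}(\varphi_{\theta})$ for the sentence obtained from the standard translation $t(\varphi_{\theta})$ by (i) adding to each conjunct $t(\chi_{\psi\vee\psi'})$ the disjointness clause $\neg\exists x\,(p_{\psi}(x)\wedge p_{\psi'}(x))$, and (ii) replacing each conjunct $t(\chi_{\Diamond\varphi})$ by $\beta\wedge\beta'$ (each $\Diamond\varphi$ getting its own fresh relation $R_{\Diamond\varphi}$). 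The goal is to prove that $\theta$ is satisfiable under strict semantics over a domain $W$ if and only if $t^{\ast}(\varphi_{\theta})$ is satisfiable over $W$. Since both the satisfiability and the finite satisfiability problems of $\mathrm{FOC}^2$ are \NEXPTIME-complete, this yields both halves of the theorem at once; note that no finite model property is invoked --- which is good, since $\mathrm{FOC}^2$ has none --- because the equivalence preserves the domain.

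For the direction from $\Minc$ to $\mathrm{FOC}^2$, assume $M,X\modelsstrict\theta$ with $X\neq\emptyset$. Exactly as in the lax argument, traverse the parse tree of $\theta$ from the root downwards, defining an interpretation $U(p_{\varphi})$ maintaining the invariant $M,U(p_{\varphi})\modelsstrict\varphi$. The only changes: at a node $\psi\vee\psi'$ choose sets $S,S'$ witnessing the \emph{strict} splitjunction clause, so $S\cup S'=U(p_{\psi\vee\psi'})$ and $S\cap S'=\emptyset$, and set $U(p_{\psi})=S$, $U(p_{\psi'})=S'$; at a node $\Diamond\varphi$ with $T=U(p_{\Diamond\varphi})$ take a witnessing function $f\colon T\to W$ with $(u,f(u))\in R$ and $M,f(T)\modelsstrict\varphi$, set $U(p_{\varphi})=f(T)$, and interpret $R_{\Diamond\varphi}$ as the graph $\{(u,f(u))\mid u\in T\}$. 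Then $R_{\Diamond\varphi}$ is a total single-valued relation on $U(p_{\Diamond\varphi})$ whose range is exactly $U(p_{\varphi})$, and $R_{\Diamond\varphi}\subseteq R$, so $\beta$ and $\beta'$ hold; the disjointness clauses hold by the choice of $S,S'$; literals, $\wedge$, $\Box$ and the inclusion atoms are handled verbatim as before --- the latter because the semantics of inclusion atoms is the same under both regimes. Hence every $w\in X$ satisfies $t^{\ast}(\varphi_{\theta})$ in the resulting model.

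For the converse, assume $N\Vdash t^{\ast}(\varphi_{\theta})$ at some point $w$ of its domain, put $X_{\varphi}\dfn V(p_{\varphi})$, and show $N,X_{\varphi}\modelsstrict\varphi$ for all $\varphi\in\mathit{SUB}(\theta)$ by induction on $\varphi$, copying the lax proof except at $\vee$ and $\Diamond$. At $\psi\vee\psi'$: the biconditional in $t(\chi_{\psi\vee\psi'})$ gives $X_{\psi\vee\psi'}=X_{\psi}\cup X_{\psi'}$ and the new disjointness clause gives $X_{\psi}\cap X_{\psi'}=\emptyset$, so $X_{\psi},X_{\psi'}$ form a strict split of $X_{\psi\vee\psi'}$; the induction hypothesis then yields $N,X_{\psi\vee\psi'}\modelsstrict\psi\vee\psi'$. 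At $\Diamond\varphi$: $\beta$ forces $R_{\Diamond\varphi}$ to induce a function $f\colon X_{\Diamond\varphi}\to W$ with $f(X_{\Diamond\varphi})=X_{\varphi}$ (totality and single-valuedness from the $\exists^{=1}$-conjunct, $\mathrm{ran}(f)\subseteq X_{\varphi}$ from the second conjunct, surjectivity onto $X_{\varphi}$ from the third), and $\beta'$ gives $(u,f(u))\in R$; with $N,X_{\varphi}\modelsstrict\varphi$ from the induction hypothesis, the strict $\Diamond$-clause gives $N,X_{\Diamond\varphi}\modelsstrict\Diamond\varphi$. Finally $X_{\theta}=V(p_{\theta})$ is nonempty (it contains $w$), so the restriction of $N$ to the vocabulary of $\theta$ witnesses strict satisfiability.

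The only genuinely new point --- and the one step requiring care --- is the $\Diamond$-clause in both directions: one must check that the three conjuncts of $\beta$ precisely encode a \emph{total} function from $V(p_{\Diamond\varphi})$ \emph{onto} $V(p_{\varphi})$, and that this is expressible with only two variables, which is exactly why the counting quantifier $\exists^{=1}$ (rather than plain $\mathrm{FO}^2$) is used. Everything else --- literals, $\wedge$, $\Box$, the inclusion atoms, and the bookkeeping of the parse-tree traversal --- is identical to the proof of the lax upper bound.
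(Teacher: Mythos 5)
Your proposal is correct and follows essentially the same route as the paper: the paper itself defines the modified $\mathrm{FOC}^2$-translation (disjointness clause for $\vee$, the formulas $\beta\wedge\beta'$ with fresh relations $R_{\Diamond\varphi}$ for $\Diamond$) and then declares the proof ``practically identical'' to the lax argument, which is exactly the verbatim-mirroring you carry out. Your explicit treatment of the $\Diamond$-case (graph of the witnessing function in one direction, extracting a total surjection from $\beta$, $\beta'$ in the other) correctly fills in the detail the paper leaves implicit.
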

\begin{proof}
Let $\theta$ be a formula of $\Minc$. An equisatisfiable translation
of $\theta$ is obtained
from the formula $\varphi_{\theta}$, which we
defined just before Theorem \ref{thm-sat-fsat-lax-exptime} in Section \ref{laxup}.
It is clear that $\varphi_{\theta}$ translates via a simple
extension of the \emph{standard translation}
into $\mathcal{GC}^2$; see \cite{blackburn} for the standard
translation of modal logic. Let $t(\varphi_{\theta})$ denote the $\mathcal{GC}^2$ -formula
obtained by using the (extension of the) standard translation.
For each $\varphi\in\mathit{SUB}(\varphi_{\theta})$,
let $t(\chi_{\varphi})$ denote the translation of 
the subformula $\chi_{\varphi}$ of $\varphi_{\theta}$;
see the argument for lax semantics for the definition of 
the formulas $\chi_{\varphi}$.
The only thing we now need to do is to modify 
the formulas $t(\chi_{\Diamond\varphi})$ 
and $t(\chi_{\varphi\vee\psi})$.
In the case of $t(\chi_{\varphi\vee\psi})$,
we simply add a conjunct stating 
that the unary predicates $p_{\varphi}$ and $p_{\psi}$
are interpreted as disjoint sets: $\neg\exists x( p_{\varphi}(x)\wedge p_{\psi}(x))$.
To modify the formulas $t(\chi_{\Diamond\varphi})$, 
we appoint a novel binary  relation $R_{\Diamond\varphi}$ for 
each formula $\Diamond\varphi\in\mathit{SUB}(\theta)$.
%
%
%
%
%
%
%
%
%
%
%
%
%
%
%
%
%
%
We then define the formula $\beta$ which  states that $R_{\Diamond\varphi}$
is a function from the interpretation of $p_{\Diamond\varphi}$ onto
the interpretation of $p_{\varphi}$.
\begin{multline*}
\beta\dfn \forall{x}\bigl( p_{\Diamond\varphi}(x)
\rightarrow \exists^{=1} y (R_{\Diamond\varphi}xy \wedge p_{\varphi}(y)\bigr)
\wedge \forall x\forall y\bigl(
R_{\Diamond\varphi}xy\rightarrow (p_{\Diamond\varphi}(x)\wedge p_{\varphi}(y))\bigr)\\
\wedge \forall y\bigl( p_{\varphi}(y)\rightarrow \exists x(p_{\Diamond\varphi}(x)
\wedge R_{\Diamond\varphi}xy)\bigr).
\end{multline*}
Define $\beta' \dfn \forall x\forall y\bigl(R_{\Diamond\varphi}xy
\rightarrow Rxy\bigr)$, where $R$ is the accessibility relation of
modal inclusion logic. The conjunction $\beta\wedge\beta'$ is the
desired  modification of 
$t(\chi_{\Diamond\varphi})$.
The modification of $t(\varphi_\theta)$,
using the modified versions of 
$t(\chi_{\varphi\vee\psi})$
and $t(\chi_{\Diamond\varphi})$,
is the desired $\mathcal{GC}^2$ -formula equisatisfiable with $\theta$.
The remaining part of the proof is practically identical to the corresponding
argument for lax semantics.
\end{proof}

Finally, we will turn to prove the corresponding lower bound in order to achieve the desired completeness result.

\begin{theorem}
The satisfiability and finite satisfiability problems for $\Minc$ with strict semantics are $\EXPTIME$-hard under $\leqpm$-reductions.	
\end{theorem}
\begin{proof}
The proof of Theorem~\ref{laxlower} works without changes for strict semantics. 
This is evident from the MINC-formula $\varphi_C$ (p.~\pageref{pg:varphiC}) used in the proof: 
It is purely propositional, so the difference between lax and strict semantics for the diamond operator is irrelevant. 
Furthermore, disjunctions only occur in the conjunct $\psi_C$ which is flat as it does not contain any inclusion atoms.
Consequently, the difference between lax and strict semantics for disjunction does not have any effect.
\end{proof}

\begin{corollary}
	The satisfiability and finite satisfiability problems for $\Minc$ with strict semantics are $\EXPTIME$-complete under $\leqpm$-reductions.
\end{corollary}

\section{Discussion}\label{trees}
The conference version of this paper \cite{hkmv15} claims different complexities regarding the satisfiability problem with respect to the underlying semantics.
The satisfiability problem under lax semantics is there shown to be $\EXPTIME$-complete and
claimed $\NEXPTIME$-complete for strict semantics.
The proof for strict semantics is incorrect. The proof argues by enforcing assignment trees through formula gadgets similarly as in Ladner's proof for satisfiability in modal logic \cite{lad77}.
The idea does work, however, on \emph{specific structures}, as we will show later in this section.

Let us be more detailed in the following.
The main procedure was the following.
We aimed to reduce the \NEXPTIME-complete dependency quantifier version of QBF (DQBF) to a variant of QBF with inclusion atoms and eventually to the satisfiability in $\MINC$.
Accordingly, we needed to express the QBF results in modal logic with strict semantics. 
In particular, propositional dependence atoms must be translated into propositional inclusion atoms. 
This is possible if we can simulate ``strict quantification over Boolean values'' by diamonds with strict semantics. 
The main idea was to force models to be of the structure of assignment trees by using a well-known technique from Ladner \cite{lad77}.
In theory, this approach works as long as we stay in a single assignment tree.
 
However, in general, the following problem arises:
Consider a single model consisting of two isomorphic assignment trees ($A$ and $B$) and a team that consists of exactly the roots of the two assignment trees. 
We begin evaluating our modal inclusion logic formula from there. 
Let $f$ be an isomorphism from tree $A$ onto tree $B$.
Consider some node $w$ in $A$. 
Now, the strict diamond sends a node $w$ in $A$ to \emph{exactly one} witness successor $u$ of $w$. 
Similarly, the strict diamond sends node $f(w)$ to exactly one witness successor $u'$ of $f(w)$. 
However, it may happen that $u$ is the ``left'' successor of $w$ and $u'$ is the ``right'' successor of $f(w)$, i.e., the nodes $f(u)$ and $u'$ are two separate nodes.
Intuitively, we have now chosen \emph{two} values for a proposition symbol $p$, ``true'' and ``false'': 
one value is realised in $u$ and the other one in $u'$. 
Consequently, intuitively and informally, we are using lax semantics by choosing two values that extend the propositional valuation function associated with the path that goes from the root of $A$ to $w$; one of these values is actually in the tree $B$, but that will not save us.

As a result, we fail to simulate strict Boolean quantification with the strict diamond, and accordingly, the definition of propositional dependence atoms using propositional inclusion atoms fails, as that definition would require that our ``Boolean quantification'' is strict. 
In a single assignment tree everything would work, but not in two.
It may be illustrative to consider the formula $\Diamond(p\subseteq\neg p)$. 
Under strict semantics this formula is not satisfiable by any singleton team. 
However, it is easy to satisfy this formula (still under strict semantics) in a team $\{x,y\}$, where both $x$ and $y$ are roots of two isomorphic trees.

In the previous section, we have seen how the result can be corrected and shown to
have the same complexity as lax semantics.
Now we will show that the proof idea of the faulty result works in a specific class of models, in particular, structures whose underlying graph is a binary tree.
In the following we introduce the necessary definitions and eventually present the mentioned result (split into the upper and lower bound cases).
In consequence, we achieve a specific $\NEXPTIME$-completeness result for satisfiability with \emph{strict} semantics.
\bigskip

We say a Kripke model $M=(W,R,V)$ is a \emph{binary tree model} if $(W,R)$ is a (possibly infinite) directed tree where every node has out-degree two or zero.

Now we show that the satisfiability and finite satisfiability problems for $\Minc$ with strict semantics restricted to binary trees are in $\NEXPTIME$. 
More precisely, let $\BinTreeMincstrictSAT$ ($\FinBinTreeMincstrictSAT$) be the following problem: 
given a formula $\varphi\in\Minc$, does there exist a (finite) binary tree model $M$ such that $M, \{r\}\models\varphi$, where $r$ is the root of $M$.

\begin{theorem}\label{thm:upperBoundStrictTrees}
$\BinTreeMincstrictSAT$ and $\FinBinTreeMincstrictSAT$ are in \NEXPTIME. 
\end{theorem}
\begin{proof}
In both cases, general and finite satisfiability, we first guess a Kripke model
which is a binary tree whose depth is at most
the modal depth of the input formula. Then we model-check the formula in the 
guessed model; the model-checking problem for modal inclusion logic under
strict semantics is NP-complete \cite{modelchminc}.
\end{proof}

In the remainder of this section we will show a matching lower bound via a chain of reductions.
These reductions use quantified variants of dependence- and inclusion logics.

\begin{theorem}\label{thm:sat-trees-nexptime-hard}
$\BinTreeMincstrictSAT$ and $\FinBinTreeMincstrictSAT$ are \NEXPTIME-hard under $\leqlogm$-reductions.
\end{theorem}

To prove this result, we will show how to reduce from a dependence variant of QBF validity to an inclusion variant of QBF validity, and finally to satisfiability of \Minc with strict semantics restricted to binary trees.

\newcommand{\QDL}{\mathsf{QDPL}}
The notion we will define shortly, is following Hannula \cite{h16} (also
discussed in \cite{hkmv15}). The set of formulas of \emph{quantifier propositional dependence logic} $\QDL$ is defined inductively by the following grammar:
$$
\varphi \ddfn
			p\mid 
			\lnot p\mid 
			(\varphi_1\land\varphi_2)\mid 
			(\varphi_1\lor\varphi_2)\mid
			\dep{p_1,\dots,p_k,q}\mid
			\forall p\, \varphi\mid
			\exists p\, \varphi,
$$
where $p,p_1,\dots,p_k,q$ are propositions and $k\in\mathbb N$. 
The semantics of $\QDL$ is then defined as follows, where $T$ is a set of assignments.
$$
\begin{array}{lll}
	T\models \forall p\,\psi& \text{ iff }& \{s(a/p) \mid s \in T, a \in \{0, 1\}\}\models \psi,\\
	T\models \exists p\, \psi & \text{ iff }& \exists F\in {}^T\{\{0\},\{1\}\}: \{s(a/p) \mid s \in T,a \in F(s)\}\models \psi,\\
	T\models\dep{p_1,\dots,p_k,q} & \text{ iff } & \forall s,s'\in T:s(\vec p)=s'(\vec p)\text{ implies }s(q)=s'(q)
\end{array}
$$
Observe that the existential quantifier is defined via strict semantics, i.e., the tuple $\{0,1\}$ is missing in the expression on the right hand side of the definition.
The connectives $\vee$ and $\wedge$ are interpreted exactly as in the case of modal inclusion logic using strict semantics.
Literals $p$, $\neg p$ are also interpreted as in modal inclusion logic.

It is straightforward to show that $\QDL$ is \emph{downwards closed}: 
for any formula $\varphi$ of $\QDL$, if $T\models\varphi$ and
$T'\subseteq T$, then $T'\models\varphi$. (See \cite[Proposition 3.10]{vaananen07} for a proof
in the case of first-order dependence logic.)

The syntax and semantics of \emph{quantifier propositional inclusion logic}, $\IQBF$, is defined in the same way as $\QDL$, except that dependence atoms are replaced by
inclusion atoms $\vec p\subseteq \vec q$. In particular, we use strict semantics for existential quantification.
The semantics of $\vec p\subseteq \vec q$ is given by the condition $T\models\vec p\subseteq\vec q$ if and only if $\forall s\in T\exists s'\in T: s(\vec p)=s'(\vec q)$. 

We denote the validity problem for sentences of $\QDL$ by $\QDL\text-\VAL$. 
Similarly, we denote the validity problem for sentences of $\IQBF$ by $\IQBFval$. 

\begin{proposition}[{\cite{h16}}]\label{prop:qpdlval-nexptime}
	$\QDL\text-\VAL$ is $\NEXPTIME$-complete under $\leqpm$-reductions.
\end{proposition}

Note that the preceding result was originally shown in
\cite{h16} for \emph{lax} semantics. However, the result for strict semantics follows easily by downwards closure.

\begin{lemma}\label{lem:qpdlval-qplinc-val}
	$\QDL\text-\VAL\leqlogm\IQBFval$. 
\end{lemma}
\begin{proof}
We translate the expressions $\mathrm{dep}(p_1,\dots,p_k,q)$ to inclusion atoms in a way to be described next.
Inspired by \citeauthor{ghk13} \cite[Corollary 23]{ghk13}, we observe that,
under strict semantics, inclusion atoms can simulate formulas $\mathrm{dep}(p_1,\dots,p_k,q)$, as the following example demonstrates:
$$\forall p\forall q\exists r(\dep{q,r}\land\varphi)\text{ is equivalent to }\forall p\forall q\exists r(\forall s(sqr\subseteq pqr)\land\varphi),$$ 
where $\varphi$ is a quantifier-free formula with free variables among $\{p,q,r\}$. 
Any sentence $\psi$ of $\QDL$ is equivalent to a sentence of the form 
$$
\forall \vec q \,\exists \vec r\left(\bigwedge_{r_i\in\vec r}\dep{\vec{q_i},r_i}\land\varphi\right),
$$
where for all $i$, the variables in $\vec{q_i}$ are contained in $\vec q$, $\vec r$ and $\vec q$ are disjoint, and $\varphi$ is a quantifier-free formula. This normal form is proved by \citeauthor{ghk13} \cite[Corollary 23]{ghk13} for \emph{first-order dependence logic}; since $\QDL$ is essentially the restriction of first-order dependence logic to Boolean 
structures, the normal form also holds for $\QDL$.

Simulating each dependence atom in the normal form of $\psi$ with an
inclusion atom, as in the example above, we see that $\psi$ is equivalent 
to
$$
\forall \vec q \,\exists \vec r\left(\bigwedge_{r_i\in\vec r}\forall \vec{s_i}(\vec{s_i}\vec{q_i}r_i\subseteq\vec{p_i}\vec{q_i}r_i)\land\varphi\right),
$$
where for all $i$, $\vec{p_i}$ contains those variables in $\vec q$
that are not in $\vec{q_i}$ and $\vec{s_i}$ is a fresh tuple of variables of the same length as $\vec{p_i}$. Thus, there is a validity preserving
translation from $\QDL$ to $\IQBF$.
\qed	
\end{proof}

\noindent Now, for the last step, we explain how \IQBFval finally reduces to $\FinBinTreeMincstrictSAT$. 
\begin{lemma}\label{lem:qplinc-mincstrictsat-trees}
	$\IQBFval\leqlogm\BinTreeMincstrictSAT$ and 
	$\IQBFval\leqlogm\FinBinTreeMincstrictSAT$.
\end{lemma}
\begin{proof}
This proof is just a slight modification of the standard proof by Ladner showing $\PSPACE$-hardness of plain modal logic via a reduction from QBF validity \cite{lad77}. 
The idea is to enforce a complete assignment tree, and as we are restricted to binary trees it is only required to map the variables in the correct way to the nodes of the tree(s). 
Further, one uses clause propositions which are true if the corresponding literal holds. 
Following the work of Ladner \cite{lad77}, define the formula which enforces the described substructure by $\varphi_{\text{struc}}$ as follows. Let $r_1,\dots, r_n$ be the variables of the given $\IQBFval$ instance $\Game_1 r_{1}\Game_2 r_{2}\cdots\Game_n r_{n}(\varphi\land\chi)$ where $\varphi$ is the conjunctive normal form formula and $\chi$ is the conjunction of the inclusion atoms (stemming from the translation in the proof of Lemma~\ref{lem:qpdlval-qplinc-val}), then
 \newcommand{\branch}[1]{\protect\ensuremath{\textit{branch}(#1)}}
 \newcommand{\store}[1]{\protect\ensuremath{\textit{store}(#1)}}
\begin{align*}
	\varphi_{\text{struc}}\dfn\branch{r_1}\land\bigwedge_{i=1}^{n-1}\Box^i\left(\branch{r_{i+1}}\land\bigwedge_{j=1}^i\store{r_j}\right),
\end{align*}

where
\begin{align*}
	\branch{r_i} &\dfn \Diamond r_i\land\Diamond\lnot r_i\\
	\store{r_i} &\dfn (r_i\to\Box r_i)\land(\lnot r_i\to\Box\lnot r_i).
\end{align*}

The final formula is then a formula of type $\varphi_{\text{struc}}\land\triangle_1\triangle_2\cdots\triangle_n(\varphi\land\chi)$, where $\triangle_i=\Box$ if $\Game_i=\forall$ and $\triangle_i=\Diamond$ if $\Game_i=\exists$. 
Let us denote this translation by the function $f$ which can be computed in polynomial time. 
Then it holds that $\varphi\in\IQBFval$ if and only if $f(\varphi)\in\BinTreeMincstrictSAT$. 
Clearly, this covers also the case for finite satisfiability. 
 \qed	
\end{proof}

\begin{proof}[of Theorem~\ref{thm:sat-trees-nexptime-hard}]
This result follows from Proposition~\ref{prop:qpdlval-nexptime} together with Lemmas~\ref{lem:qpdlval-qplinc-val} and \ref{lem:qplinc-mincstrictsat-trees}.\qed
\end{proof}

\noindent The following corollary follows from Theorem~\ref{thm:upperBoundStrictTrees} and Theorem~\ref{thm:sat-trees-nexptime-hard}.

\begin{corollary}
$\BinTreeMincstrictSAT$ and $\FinBinTreeMincstrictSAT$ are both $\NEXPTIME$-complete under $\leqpm$-reductions.
\end{corollary}

\section{Conclusion}\label{conclusion}
We have compared the strict and lax variants of team semantics from the perspective of satisfiability problems for modal inclusion logic \Minc. 
Interestingly, the problems do not differ in their complexities. 
Both lead to completeness for the class \EXPTIME under strong $\leqpm$-reductions. 
We have seen how the restrictions of the structures to binary trees allows for an increase in complexity, that is, for strict semantics the problem becomes $\NEXPTIME$-complete.
For further research it is left open to classify the complexity of the lax version under these specific type of structures.
\bigskip

\noindent\textbf{Acknowledgements.}
The second author acknowledges support from the ERC project 647289 ``CODA" and
Jenny and Antti Wihuri Foundation.
The third author is supported by DFG grant ME 4279/1-1.

\bibliographystyle{plainnat}
\bibliography{minc}
\label{LastPage}
\end{document}